\documentclass[sigconf,nonacm]{aamas}

\usepackage{balance}
\usepackage{amsmath}
\usepackage{tabularx}
\usepackage{array}
\usepackage{bbm}
\usepackage{tikz}
\usetikzlibrary{positioning,arrows.meta,calc}
\usepackage{pgfplots}
\pgfplotsset{compat=1.18}
\usepgfplotslibrary{groupplots}

\title[Evidence-Grounded Diagnosis of Collective Mechanisms]{Where Do Multi-Agent Systems Fail? Evidence-Grounded Diagnosis of Collective Mechanisms}

\author{Zhengye Han}
\affiliation{
  \institution{New York University}
  \city{New York}
  \state{NY}
  \country{USA}}
\email{zh3286@nyu.edu}

\begin{abstract}
When a multi-agent system answers correctly, it is tempting to conclude
that its agents shared, checked, and used information as intended.  Yet
a system can break one of its collective mechanisms, the rules that
govern how agents route, admit, store, and act on shared information,
and still return the right answer, while a wrong answer rarely reveals
which mechanism failed.  We ask what evidence from an execution is
sufficient to conclude that a particular mechanism was violated.  Our
answer is a diagnostic contract, which separates what counts as a
violation from which execution records can establish one, and concludes
that a violation is supported, ruled out, or unknown; removing records
can make this conclusion unknown but never reverse it.  We test
contracts for four mechanisms by replaying executions from the step
where a mechanism acts, once unchanged, once with the mechanism broken,
and once with it restored.  Broken mechanisms often left the answer
correct.  An LLM diagnoser detected many more violations from internal
records than from public outputs, yet with identical records a generic
prompt often claimed certainty the records did not support, which
prompts stating the contracts largely avoided.  The contracts also
applied, in narrow form, to mechanisms in independently developed
systems, but a diagnostic behavior that was nearly perfect on our
benchmark degraded on an independently developed workflow.  A correct
outcome is therefore no substitute for records of how collective
mechanisms operated, and agreement on one benchmark does not show that
a diagnoser transfers to another system.

\end{abstract}

\ccsdesc[500]{Computing methodologies~Multi-agent systems}

\keywords{multi-agent systems, collective mechanisms, failure diagnosis, diagnostic contracts, execution evidence}

\newcommand{\loci}{\mathcal{L}}

\AtBeginDocument{
\fancypagestyle{standardpagestyle}{
  \fancyhf{}
  \fancyhead[LO]{\sffamily\footnotesize Where Do Multi-Agent Systems Fail?}
  \fancyhead[RE]{\sffamily\footnotesize Zhengye Han}
  \fancyfoot[C]{\footnotesize\thepage}
  }
\fancypagestyle{firstpagestyle}{
  \fancyhf{}
  \fancyfoot[C]{\footnotesize\thepage}
  }
\pagestyle{standardpagestyle}}
\begin{document}
\maketitle

\section{Introduction}
\label{sec:intro}

\begin{figure*}[t]
\centering
\includegraphics[width=0.99\textwidth]{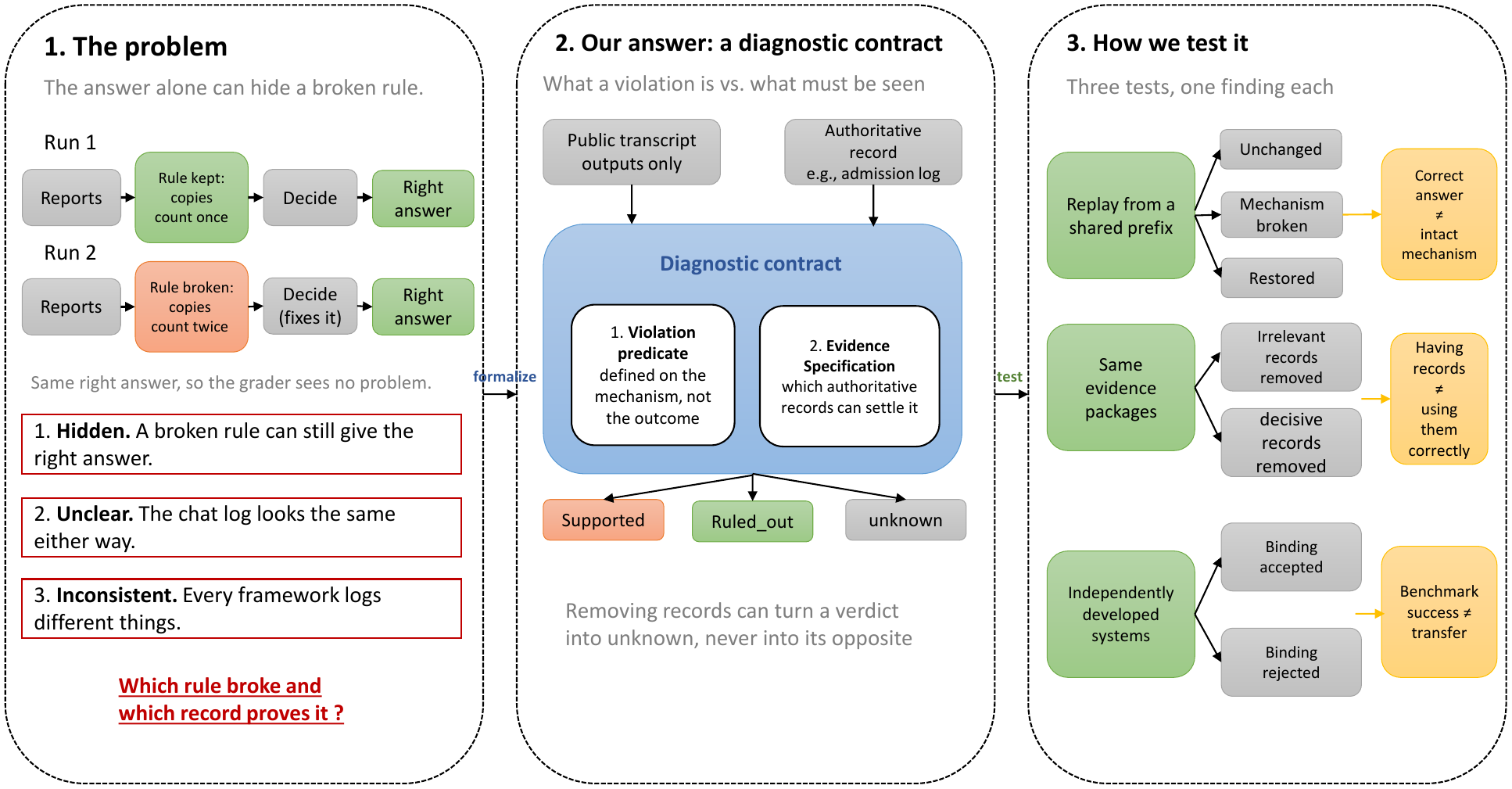}
\caption{\textbf{Where a multi-agent system fails, and how we diagnose it.}
(1)~Both runs answer correctly, but the second breaks the admission rule
by counting two copies of one report as independent; neither the answer
nor the public transcript reveals this, and frameworks log different
things.  (2)~A diagnostic contract separates the violation predicate,
what a violation \emph{is}, from the evidence specification, which
authoritative records must be \emph{seen} to settle it, and returns
\emph{supported}, \emph{ruled out}, or \emph{unknown}.  (3)~Three tests,
each yielding one finding.}
\Description{Three panels.  The first shows two runs: in one the rule
that copies count once is kept, in the other it is broken, and both
reach the right answer; three boxes name the problems (hidden, unclear,
inconsistent) and ask which rule broke and which record proves it.  The
second shows a public transcript and an authoritative record entering a
diagnostic contract made of a violation predicate and an evidence
specification, which returns supported, ruled out, or unknown.  The
third shows three tests (replay from a shared prefix, same evidence
packages, independently developed systems), each leading to one finding.}
\label{fig:protocol}
\end{figure*}

In a multi-agent system, agents do not only reason.  They pass
information to one another, decide which reports to trust, keep shared
state, and turn decisions into actions.  We call the rules that govern
these steps \emph{collective mechanisms}.  When such a system
misbehaves, repairing it requires knowing which mechanism failed:
whether a failure is better repaired by changing the rules or by
strengthening the agents depends on where it
arose~\cite{han2026institutions}.  A wrong diagnosis therefore leads to
the wrong repair, and a violation that no one detects stays in the
system.  The direct evidence about these mechanisms is the
\emph{execution record}: the messages, inputs, state updates, and
action receipts that a run leaves behind.

Recent work has made failed multi-agent executions much easier to
analyze.  Failure taxonomies classify what goes wrong in annotated
traces~\cite{cemri2025mast}; attribution methods identify the agent and
step responsible for a failure~\cite{zhang2025whowhen,chen2026traceelephant};
constraint-based methods localize critical failure
steps~\cite{barke2026agentrx}; and fault-injection benchmarks measure how
known faults are detected and
recovered~\cite{jia2026masfire,zhang2026agentchaos,chen2026orchestra}.
These approaches typically start from an unsuccessful outcome or a known
fault and return a label for what went wrong.

Diagnosing collective mechanisms, however, raises three difficulties
that outcome-driven analysis does not address.  The first is
\emph{concealment}: a violated mechanism need not produce a wrong answer.
Consider a system that must choose between two verdicts, \textsc{beta}
and \textsc{delta}.  Source agents forward reports to an aggregator, and
an admission step decides which reports the aggregator may use.  Two
independent registries support \textsc{beta} with weight two each; two
reports support \textsc{delta} with weight three each, but both are
copies of one wire story, and the admission rule counts such copies
once.  In one execution of our benchmark, the admission step counted
both copies, favoring \textsc{delta} by six to four, yet the system
answered \textsc{beta}: the aggregator's recorded reasoning shows that
its own instructions restated the rule and that it collapsed the copies
itself.  The run was scored as a success, so an analysis triggered by
failure would never examine it.  The second difficulty is
\emph{ambiguity}.  The public transcript of this run is equally
consistent with an execution in which admission worked, a label such as
``admission failure'' does not say which record would justify it, and a
missing record is not evidence that an event did not occur.  The third
is \emph{heterogeneity}.  Frameworks record different things, so a
diagnosis read off one framework's traces may not mean the same on
another, and a diagnostic procedure that works on one set of systems
may not transfer to the next.  These difficulties lead to one question:

\begin{quote}
\emph{Given an execution record, what evidence is sufficient to conclude
that a particular collective mechanism was violated?}
\end{quote}

We answer it with a \emph{diagnostic contract}, which separates what a
violation \emph{is} from what an observer must \emph{see} to establish
it, and each of its parts addresses one difficulty.  Because the
contract defines a violation over the operation of a mechanism rather
than over the task outcome, it applies to successful runs as well as
failed ones; for admission, a violation occurs when evidence is accepted
or weighted contrary to the admission rule.  Because it also lists the
\emph{authoritative records} that can settle the question, namely
records bound to the specific operation, such as the admission step's
own log, rather than a transcript that a correct run could equally have
produced, it can conclude that a violation is \emph{supported} when
every complete execution consistent with the available records violates
the rule, \emph{ruled out} when none does, and \emph{unknown} otherwise.
Removing records can then make a conclusion unknown but never reverse
it.  Finally, because this meaning is
fixed independently of any framework, only the binding between a
contract and a system's records changes across runtimes, so portability
becomes something to test rather than to assume.  We write contracts
for the four collective mechanisms that Han~\cite{han2026institutions}
identified as the places where collective reasoning breaks down:
\emph{access/routing}, whether required information reaches the agent
that needs it; \emph{admission/dependence}, which evidence is accepted
and how reports with a common origin are counted; \emph{state
maintenance}, whether accepted information is kept and updated; and
\emph{representation/action}, whether a correct decision is carried
out.  We call each of these a diagnostic \emph{locus}.

Testing a contract requires knowing whether the mechanism was in fact
violated, and natural runs rarely provide this ground truth.  Rerunning
a system with a broken mechanism does not help either: the language
models upstream would produce different content, and any change in the
outcome could come from that content rather than from the mechanism.
We therefore replay each execution up to the step where a mechanism
acts and continue it three ways from the identical prefix: unchanged,
with that one mechanism broken, and with it restored.  This supplies
ground truth by construction and lets us record separately whether the
mechanism was violated and whether the task succeeded.  To separate
what a diagnoser sees from how it reasons, we give every diagnoser the
same evidence packages and remove either irrelevant or decisive
records.  To test portability, we bind the unchanged contracts to
mechanisms implemented in independently developed multi-agent systems,
and accept a binding only when the system's own code implements the
rule and records the evidence the contract needs.  Throughout, we treat
a diagnosis as a claim that the execution record must license, not as a
label that the outcome suggests.

Our contributions are:
\begin{itemize}
  \item \textbf{Diagnostic contracts.}  We define when execution records
  support, rule out, or leave open a violation of a collective
  mechanism, and show that these conclusions can lose but never reverse
  their resolution when records are removed.
  \item \textbf{A controlled benchmark.}  We provide replayed executions
  with one mechanism broken or restored at a time across three task
  families and four agent frameworks, evidence packages that hold the
  records fixed, and bindings to independently developed systems.
  \item \textbf{Outcome is not mechanism; access is not use.}  In 43 of
  72 controlled admission violations the system still answered
  correctly, and a separately frozen allocation task showed the same
  pattern.  An LLM asked to diagnose without examples identified 26\% of
  broken mechanisms from public outputs and 63\% from internal records,
  yet with identical records a generic prompt often claimed conclusions
  the records did not support, whereas prompts stating the contracts
  largely avoided this on the outputs they completed.
  \item \textbf{Benchmark success need not transfer.}  The same frozen
  prompt kept its conclusion when irrelevant records were removed in all
  287 valid benchmark pairs but in only 13 of 28 pairs from an
  independently developed workflow, while still answering unknown
  whenever decisive records were removed.  The settings differ in their
  tasks, so we read this as evidence of a transfer gap rather than an
  estimate of its size.
\end{itemize}
All results concern finite, controlled suites: the contracts cover four
mechanisms and exclude incentives, and they are tested one mechanism at
a time with supplied repairs.

\section{Related Work}
\label{sec:related}

\paragraph{Failure analysis in LLM-based multi-agent systems.}
The taxonomy, attribution, and localization methods discussed in
Section~\ref{sec:intro}~\cite{cemri2025mast,zhang2025whowhen,chen2026traceelephant,barke2026agentrx}
answer \emph{what kind} of failure occurred, \emph{who} caused it, and
\emph{when}.  We ask which records license the claim that a particular
collective mechanism was violated, including in successful runs; one
step can take part in several violations, and evidence that suffices for
attribution may leave that claim unresolved.

\paragraph{Fault injection and robustness.}
Fault-injection benchmarks measure how specified perturbations affect
detection, robustness, and recovery in multi-agent executions
\cite{jia2026masfire,zhang2026agentchaos,chen2026orchestra}, continuing
a longer line on fault tolerance and diagnosability~\cite{ezekiel2017fault}.
We also perturb executions, but we treat the injected condition as
experimental metadata rather than diagnostic evidence: a diagnosis must
be established from the records of the run, and the same perturbation
can leave the target predicate supported, ruled out, or unknown
depending on what those records expose.

\paragraph{Norm monitoring and model-based diagnosis.}
Normative multi-agent systems study when a monitor can detect norm
violations from the actions it observes.  Bulling et al.\ analyze
monitors and their relation to the norms they check
\cite{bulling2013monitoring}, and Alechina et al.\ synthesize optimal
approximations of norms for monitors with imperfect observational
capabilities~\cite{alechina2014approximation}.  Model-based and
plan-execution diagnosis likewise infer failed components from partial
observations~\cite{dejonge2009diagnosis,kalech2022modelbased}.  Our
contracts share the premise that observability bounds what can be
concluded, and our three-valued verdicts follow runtime verification of
partially observed traces~\cite{bauer2011runtime}.  We depart from this
tradition in three ways.  First, a contract's meaning is bound to
source-specific authoritative runtime records and their provenance,
rather than to observations of a known system model.  Second, the
benchmark uses shared-prefix perturbation and restoration to separate a
realized violation from both the task outcome and the injection
metadata.  Third, it measures empirically whether generic diagnostic
procedures respect evidence obligations, namely uncertainty under
witness loss and invariance under irrelevant evidence, including on
independently developed workflows.  The formal properties in
Section~\ref{sec:object} ground these tests; they are not claimed as
new theory.

\paragraph{Collective mechanisms and runtime surfaces.}
The four loci come from the institutional account of
Han~\cite{han2026institutions}.  The present contribution is not the
taxonomy but its operationalization as executable, evidence-grounded
contracts.  AutoGen, LangGraph, CAMEL, and MetaGPT expose different
message, state, and orchestration records
\cite{wu2023autogen,langgraph2026,li2023camel,hong2023metagpt}; we use
them to test whether one diagnostic semantics can be witnessed through
heterogeneous runtime surfaces, not to assess the frameworks themselves.

\section{Diagnostic Contracts}
\label{sec:object}

A diagnostic contract answers two questions about one collective
mechanism: what counts as a violation, and which records an observer
needs in order to establish one.  We first describe the mechanisms and
their violations (Section~\ref{sec:loci}), then the evidence an
observer can use (Section~\ref{sec:evidence}), and finally how the two
combine into a verdict (Section~\ref{sec:verdict}).  Throughout, we
return to the admission example from the introduction.

\subsection{Mechanisms and violations}
\label{sec:loci}

Following Han~\cite{han2026institutions}, we describe the collective
mechanisms of a multi-agent system as an \emph{institution}
\begin{equation}
    I=\langle\rho,V,U,\Phi\rangle,
    \label{eq:institution}
\end{equation}
where $\rho$ decides which agent receives which information, $V$
decides which evidence is accepted and how reports with a common origin
are counted, $U$ decides how shared state is updated, and $\Phi$
decides how a decision becomes an action.  The four diagnostic loci are
$\loci=\{\rho,V,U_{\mathrm{state}},\Phi\}$, where $U_{\mathrm{state}}$
is the state-maintenance part of $U$; the incentive part of $U$ is
outside our scope.

Fix a task instance $q$.  A run of $q$ can unfold in many ways, and we
write $\Omega_q$ for the set of \emph{complete executions}, each a full
description of one run, including events that no log recorded.  For a
locus $\ell$, the \emph{violation predicate}
$p_{q\ell}:\Omega_q\to\{0,1\}$ equals $1$ exactly when execution $w$
breaks the rule of $\ell$.  In the admission example, $p_{q\ell}(w)=1$
when the admission step of $w$ counted the two wire copies as
independent.  A \emph{diagnostic contract} pairs this predicate with an
\emph{evidence specification}:
\begin{equation}
    \Gamma_{q\ell}=\bigl(p_{q\ell},\mathsf{Spec}_{q\ell}\bigr).
    \label{eq:contract}
\end{equation}
The specification states what an observer must know to evaluate the
predicate: when the contract applies, which task facts matter (here,
that the two copies share an origin), which record types are
authoritative (here, the admission log), and when those records are
complete (here, when the log covers every report in the decision).  The
predicate says what is true of a run; the specification says what must
be seen to know it.  A framework with a different log format may
therefore change whether a verdict can be reached, but not what the
verdict means.  Table~\ref{tab:loci} summarizes the four loci.

\begin{table}[t]
\caption{\textbf{The four diagnostic loci.}  For each locus: what counts
as a violation, which record can establish it, and a similar-looking
case that does not count.}
\label{tab:loci}
\centering
\footnotesize
\setlength{\tabcolsep}{3pt}
\begin{tabularx}{\columnwidth}{@{}>{\raggedright\arraybackslash}p{1.35cm}
  >{\raggedright\arraybackslash}X >{\raggedright\arraybackslash}X
  >{\raggedright\arraybackslash}X@{}}
\toprule
Locus & Violation & Authoritative record & Not a violation \\
\midrule
Access/ routing & Required information produced upstream is missing
  from the receiver's input & The receiver's complete input & Missing
  information the decision does not need \\
\addlinespace
Admission/ dependence & Evidence is accepted or weighted against the
  verification or common-origin rule & The admission decision and each
  report's origin & Suspicious reports that are correctly rejected or
  collapsed \\
\addlinespace
State maintenance & An accepted, relevant item is lost, stale, or
  overwritten & Shared state before and after the update & A change to
  state the task does not use \\
\addlinespace
Represen\-tation/ action & A correct decision is not carried out as a
  valid action & The submitted action and its receipt & An incorrect
  decision, to which the contract does not apply \\
\bottomrule
\end{tabularx}
\end{table}

Each predicate concerns one locus.  Supporting a violation at one locus
says nothing about the others and does not by itself explain the final
outcome; several violations may coexist, although our benchmark breaks
one mechanism at a time.  An agent's own reasoning errors fall outside
the contracts unless they also violate one of these rules.

\subsection{What counts as evidence}
\label{sec:evidence}

An observer never sees a complete execution, only what the system
recorded.  We represent what is available as an \emph{evidence bundle}
\begin{equation}
    E=(\mathcal R,\mathcal P,\mathcal C,\mathcal S).
    \label{eq:evidence}
\end{equation}
In the admission example, $\mathcal R$ holds the records, such as the
log entry stating that both copies were admitted; $\mathcal P$ holds
their provenance, namely that this entry was written by this run's
admission step about these two reports; $\mathcal C$ holds the
specifications, such as the admission rule and the fact that the copies
share an origin; and $\mathcal S$ states the scope of completeness, for
instance that the log lists every report the decision considered.

Two rules govern how a bundle may be read.  First, a missing record is
not a record of absence: if the receiver's input was not logged, the
bundle cannot show that an item was missing from it.  Second, authority
is specific to an operation.  A public transcript is not the input that
a particular agent received; a tool registry shows that a tool exists,
not that a caller could use it in this run; and a receipt counts only
when it is bound to the caller, payload, and operation it reports on.
Before any verdict is computed, an integrity check looks for records
that contradict one another or are bound to the wrong source.  A failed
check returns \texttt{measurement\_invalid}, which reports a broken
measurement rather than a verdict about the mechanism.

\subsection{From evidence to a verdict}
\label{sec:verdict}

Let $\mathcal W_{\Gamma}(E)\subseteq\Omega_q$ be the complete executions
that agree with everything in $E$, that is, the runs that could have
produced exactly these records.  The contract returns
\begin{equation}
d_{\Gamma}(E)=
\begin{cases}
    \texttt{supported},
        & p_{q\ell}(w)=1
          \text{ for every }w\in\mathcal W_{\Gamma}(E),\\
    \texttt{ruled\_out},
        & p_{q\ell}(w)=0
          \text{ for every }w\in\mathcal W_{\Gamma}(E),\\
    \texttt{unknown},
        & \text{otherwise}.
\end{cases}
\label{eq:semantics}
\end{equation}
In words, a violation is supported when every run consistent with the
records violates the rule, ruled out when none does, and unknown when
the records fit both kinds of run.  \texttt{unknown} is therefore not
low confidence or a refusal to answer; it is the correct verdict when
the records cannot tell the possibilities apart.  In the admission
example, a complete admission log showing both copies admitted as
independent supports the violation; a log showing them collapsed rules
it out; and the public transcript alone, which both kinds of run could
have produced, leaves the verdict unknown.  The implementation computes
these verdicts with contract-specific checks rather than by enumerating
$\mathcal W_{\Gamma}(E)$.

\paragraph{Removing evidence.}
Diagnosers often see less than the full record, so we ask what happens
when records are removed.  A \emph{legal deletion} of $E$ removes some
records and keeps the rest unchanged: it is a bundle
$E'=(\mathcal R',\mathcal P',\mathcal C,\mathcal S')$ with
$\mathcal R'\subseteq\mathcal R$, provenance $\mathcal P'$ restricted to
$\mathcal R'$, and completeness claims $\mathcal S'$ only over scopes
already complete in $\mathcal S$ and still covered by $\mathcal R'$.
Because a missing record never counts as evidence of absence, removing
records can only admit more candidate runs:
\begin{equation}
    E'\preceq E
    \quad\text{where}\quad
    E'\preceq E \iff \mathcal W_{\Gamma}(E)\subseteq\mathcal W_{\Gamma}(E').
    \label{eq:info-order}
\end{equation}

\begin{proposition}[No opposite resolved flips]
\label{prop:monotone}
Let $E'$ be a legal deletion of $E$, and suppose that the contract
applies and the integrity check succeeds under both bundles.  If
$d_\Gamma(E')=\texttt{supported}$, then $d_\Gamma(E)=\texttt{supported}$;
if $d_\Gamma(E')=\texttt{ruled\_out}$, then
$d_\Gamma(E)=\texttt{ruled\_out}$.  Equivalently,
$d_\Gamma(E')\in\{d_\Gamma(E),\texttt{unknown}\}$.
\end{proposition}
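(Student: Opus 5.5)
The argument rests on the information order in~\eqref{eq:info-order}: once we know that a legal deletion can only enlarge the set of consistent executions, both implications follow from the universal quantifiers in~\eqref{eq:semantics}. So we first establish $\mathcal W_{\Gamma}(E)\subseteq\mathcal W_{\Gamma}(E')$ and then transfer the verdicts.

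\textbf{Step 1 (inclusion).} Take any $w\in\mathcal W_{\Gamma}(E)$ and check that it agrees with every component of $E'=(\mathcal R',\mathcal P',\mathcal C,\mathcal S')$.
\begin{itemize}
\item Each record in $\mathcal R'\subseteq\mathcal R$ is consistent with $w$, since $w$ agrees with all of $\mathcal R$.
\item Each provenance fact in $\mathcal P'$ is the restriction of a fact in $\mathcal P$, so $w$ satisfies it.
\item The specifications $\mathcal C$ are unchanged.
\item Each completeness claim in $\mathcal S'$ is over a scope already complete in $\mathcal S$. Within that scope the records that remain are exactly those of $\mathcal R$, because the scope is still covered by $\mathcal R'$. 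Hence $w$ satisfies the claim.
\end{itemize}
We rely on one fact beyond this check. Consistency of $w$ with a bundle must be a conjunction of constraints, one per record, provenance fact, specification, or completeness claim. No constraint may come from the \emph{absence} of a record. This is precisely the first reading rule of Section~\ref{sec:evidence}: a missing record is not a record of absence. The main obstacle is stating this faithfulness condition explicitly. Without it, a deletion could add a constraint and shrink $\mathcal W_\Gamma$. I would state it as the formal content of that rule and note that the handling of $\mathcal S'$ is what prevents a deletion from creating a spurious completeness claim.

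\textbf{Step 2 (transfer of verdicts).} Assume $d_\Gamma(E')=\texttt{supported}$. Then $p_{q\ell}(w)=1$ for all $w\in\mathcal W_\Gamma(E')$, and in particular for all $w\in\mathcal W_\Gamma(E)$ by Step 1, so $d_\Gamma(E)=\texttt{supported}$. The \texttt{ruled\_out} case is symmetric.

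One subtlety needs attention. If $\mathcal W_\Gamma(E)$ were empty, both conditions in~\eqref{eq:semantics} would hold vacuously and the verdict would be ill-defined. The hypothesis that the integrity check succeeds under $E$ is meant to exclude contradictory bundles, so we read it as guaranteeing $\mathcal W_\Gamma(E)\neq\emptyset$. Then the two resolved cases are mutually exclusive and $d_\Gamma$ is well defined on both bundles. The applicability hypothesis ensures that $d_\Gamma$ is the same three-valued map in both cases.

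\textbf{Step 3 (equivalent form).} By contraposition of Step 2, if $d_\Gamma(E)\neq\texttt{supported}$ then $d_\Gamma(E')\neq\texttt{supported}$, and likewise for \texttt{ruled\_out}. A short case check on the three values then gives $d_\Gamma(E')\in\{d_\Gamma(E),\texttt{unknown}\}$:
\begin{itemize}
\item if $d_\Gamma(E')$ is \texttt{unknown}, the claim holds directly;
\item if $d_\Gamma(E')$ is \texttt{supported} or \texttt{ruled\_out}, Step 2 gives $d_\Gamma(E)=d_\Gamma(E')$.
\end{itemize}
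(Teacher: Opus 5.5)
Your proof is correct and follows the paper's argument: the integrity check gives $\mathcal W_\Gamma(E)\neq\emptyset$, legal deletion gives $\mathcal W_\Gamma(E)\subseteq\mathcal W_\Gamma(E')$, and the universal condition for \texttt{supported} or \texttt{ruled\_out} on $\mathcal W_\Gamma(E')$ then carries over to the nonempty subset $\mathcal W_\Gamma(E)$. The only difference is that you check the inclusion component by component and state explicitly that consistency imposes no constraint from an absent record, whereas the paper takes the inclusion directly from the information order, justified by the rule that a missing record is not a record of absence.
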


\begin{proof}
Integrity under $E$ gives $\mathcal W_\Gamma(E)\neq\emptyset$, and
Equation~\eqref{eq:info-order} gives $\mathcal W_\Gamma(E)\subseteq
\mathcal W_\Gamma(E')$.  If $p_{q\ell}(w)=1$ for every
$w\in\mathcal W_\Gamma(E')$, the same holds on the nonempty subset
$\mathcal W_\Gamma(E)$; the ruled-out case is symmetric.
\end{proof}

\begin{corollary}[Irrelevant-evidence invariance]
\label{cor:irrelevant}
Suppose $d_\Gamma(E)$ is resolved and some legal deletion $F$ of $E$
already has $d_\Gamma(F)=d_\Gamma(E)$; call $F$ a \emph{witness}.  Then
every legal deletion $E'$ of $E$ that retains $F$, that is, whose
records and completeness claims include those of $F$, satisfies
$d_\Gamma(E')=d_\Gamma(E)$.
\end{corollary}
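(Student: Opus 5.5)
The plan is to sandwich $E'$ between $F$ and $E$ in the information order of Equation~\eqref{eq:info-order}. Then, just as in Proposition~\ref{prop:monotone}, a resolved verdict shared by the two ends is inherited by everything in between.

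\textbf{Step 1: locate $E'$ in the order.} I will first argue that $F$ is a legal deletion of $E'$. By hypothesis, the records of $F$ are contained in those of $E'$, and both are subsets of $\mathcal R$. Provenance in both bundles is the restriction of $\mathcal P$, so $\mathcal P_F$ is the restriction of $\mathcal P_{E'}$ to $\mathcal R_F$. The completeness claims of $F$ are scopes that were complete in $\mathcal S$, are included in $\mathcal S_{E'}$, and are still covered by $\mathcal R_F$. The specification component $\mathcal C$ is common to all three bundles. Hence $F$ is obtained from $E'$ by a legal deletion.

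Applying Equation~\eqref{eq:info-order} twice then gives
\[
\mathcal W_\Gamma(E)\subseteq\mathcal W_\Gamma(E')\subseteq\mathcal W_\Gamma(F).
\]

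\textbf{Step 2: transfer the verdict.} Suppose, say, $d_\Gamma(E)=\texttt{supported}$, so that $d_\Gamma(F)=\texttt{supported}$ as well. Then $p_{q\ell}(w)=1$ for every $w\in\mathcal W_\Gamma(F)$, and therefore for every $w\in\mathcal W_\Gamma(E')$. It remains to rule out the vacuous case. Since $d_\Gamma(E)$ is resolved under a passing integrity check, $\mathcal W_\Gamma(E)\neq\emptyset$, as in the proof of Proposition~\ref{prop:monotone}. The larger set $\mathcal W_\Gamma(E')$ is then also nonempty, so the universal statement is not vacuous and $d_\Gamma(E')=\texttt{supported}$. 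The \texttt{ruled\_out} case is symmetric.

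Equivalently, I could apply Proposition~\ref{prop:monotone} twice. Applied to the pair $(F,E')$, it shows that if $d_\Gamma(F)$ is resolved then $d_\Gamma(E')=d_\Gamma(F)$. This route needs the contract to apply and the integrity check to succeed under $E'$ and $F$, so I would state these standing assumptions explicitly, matching those of the proposition.

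\textbf{Main obstacle.} The substantive step is Step 1: checking that ``retains $F$'' really makes $F$ a legal deletion of $E'$. The delicate component is the completeness claims. A scope claimed complete in $F$ must be complete in $E'$'s sense, that is, listed in $\mathcal S_{E'}$ and covered by $\mathcal R_{E'}$. The first condition is exactly the ``completeness claims include those of $F$'' hypothesis. The second follows because $\mathcal R_F\subseteq\mathcal R_{E'}$ and coverage is monotone in the record set.

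I would make this inclusion argument explicit, since it is what fails if $E'$ drops a completeness claim on which $F$'s verdict depends. Everything else is the same subset argument used for Proposition~\ref{prop:monotone}.
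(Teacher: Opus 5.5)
Your proposal is correct and follows the paper's proof. The paper's whole argument is that $F$ is a legal deletion of $E'$, so Proposition~\ref{prop:monotone} applied to $(E',F)$ gives $d_\Gamma(E')=d_\Gamma(F)$; that is your Step~1 plus your ``equivalent'' route, and your sandwich variant is a small refinement that draws nonemptiness from $\mathcal W_\Gamma(E)\subseteq\mathcal W_\Gamma(E')$ and so needs the integrity check only under $E$, an assumption the paper leaves implicit.
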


\begin{proof}
$F$ is a legal deletion of $E'$, so Proposition~\ref{prop:monotone}
applied to $(E',F)$ gives $d_\Gamma(E')=d_\Gamma(F)$.
\end{proof}

Both statements follow directly from Equation~\eqref{eq:semantics};
they are properties of the contract semantics, not deep results.  Their
role is to give two testable obligations to any diagnoser.  Removing
records may make a verdict unknown but must never reverse it; and
removing records outside a witness must not change the verdict at all.
Each resolved verdict is stored with the records and completeness
claims it relied on, so a reader can check both the claim and its
evidence.

\section{Benchmark Protocol and Challenge Suite}
\label{sec:protocol}

To test whether a diagnosis is correct, we must know whether the
mechanism was in fact violated.  We create this ground truth by
experiment: we take a run, freeze everything up to the step where one
mechanism acts, and continue the run three ways, once unchanged, once
with that mechanism broken, and once with it restored.  Panel~(3) of
Figure~\ref{fig:protocol} sketches the protocol.  This section
describes the tasks (Section~\ref{sec:suite}), the three continuations
(Section~\ref{sec:prefix}), and what the agent frameworks contribute
(Section~\ref{sec:ownership}).

\subsection{Tasks and held-out split}
\label{sec:suite}

Every task has the same small team: three source agents, each holding
some reports; an aggregator that combines the reports it receives into
a decision; and a submitter that turns the decision into an action.
Because each task has an exactly known answer, success is checked
without an LLM judge.  The \emph{canonical suite} contains two task
families.  In \emph{evidence aggregation}, the team chooses among four
verdicts from weighted reports; two independent verified reports support
the correct verdict, while two reports sharing one origin and one
unverified report support a wrong one, so the right answer requires the
admission rule to discard the unverified report and count the shared
pair once.  The execution in the introduction comes from this family.
In \emph{symbolic constraint}, the team identifies one of 24 candidates
from reports about its color, shape, and mark, again in the presence of
an unverified report and a pair of reports with a common origin.  A
third family, \emph{distributed resource allocation with commitments}
(DRAC), was frozen separately after the canonical evaluation: an
allocator assigns five tasks to agents under capacity limits, records
the assignment as a versioned reservation, and settles it through an
action; two independently written exact solvers agree on every DRAC
answer.

For the canonical suite, two families, four loci, and ten instances per
combination give 80 frozen challenges; a fixed rule reserves one
instance per combination for development and leaves 72 held out.  DRAC
has its own freeze of 40 challenges, 36 of them held out.  No held-out
challenge was run before the formal evaluation.  The split therefore
tests new instances of known families and mechanisms, not new families.

Each perturbation leaves the reports unchanged and breaks only the rule
of its target locus.  For access/routing, it removes one required
delivery to the aggregator.  For admission/dependence, it disables the
verification check or counts reports with a common origin separately.
For state maintenance, it skips an authoritative update or overwrites
an accepted item.  For representation/action, it keeps the aggregator's
decision but changes how the submitter encodes it, so that the action
is rejected.  If the aggregator's decision was already wrong, the
action contract does not apply, and the run is not counted as an
action failure.

\subsection{Replaying from a shared prefix}
\label{sec:prefix}

If we simply reran the system with a broken mechanism, the language
models upstream would produce different reports, and a change in the
outcome could come from those reports rather than from the mechanism.
We therefore replay one frozen \emph{prefix}, the part of the run before
the targeted step, under three versions of the institution.  For
challenge $q$, framework adapter $f$, and target locus $\ell$, let
$P_{qf}$ be that prefix, $T_{q\ell}$ the perturbation, and $R_{q\ell}$
the matched restoration:
\begin{equation}
    I_q^A = I_q,\qquad
    I_q^B = T_{q\ell}(I_q),\qquad
    I_q^C = R_{q\ell}\bigl(I_q^B\bigr).
    \label{eq:branches}
\end{equation}
Writing $\pi_k(I)$ for component $k$ of institution $I$, the design
requires that the three versions agree everywhere except at the target,
and that the restoration returns the target to its clean value:
\begin{equation}
    \pi_k(I_q^A)=\pi_k(I_q^B)=\pi_k(I_q^C)\ \ \text{for } k\neq\ell,
    \qquad
    \pi_\ell(I_q^C)=\pi_\ell(I_q^A).
    \label{eq:coordinate-purity}
\end{equation}
Each branch $a\in\{A,B,C\}$ then continues from the same prefix,
$\tau^a_{qf}=\operatorname{Continue}_f(P_{qf},I_q^a;\xi_a)$, where
$\xi_a$ is the randomness of any model calls made after the prefix.
Branch comparisons are therefore conditioned on an identical history,
but later model calls may still differ; for representation/action, the
decision is part of the prefix, so no later model call is needed.
For the admission execution in the introduction, the prefix contains
the four reports as the source agents produced them.  In branch $A$ the
admission step counts the two wire copies once, in $B$ it counts them
separately, and in $C$ the common-origin rule is restored.  The
reference verdict is ruled out, supported, and ruled out, while the
task succeeds in all three branches, because the aggregator collapsed
the copies itself in $B$.  This is exactly the kind of case that a
purely outcome-based check cannot see.
Runtime checks confirm that the prefix is identical across branches,
that the perturbation took effect, and that the restoration returned
the target to its clean value.  Branch $C$ applies a repair that we
supply; it does not test whether an agent could find the repair itself.

\subsection{What the frameworks contribute}
\label{sec:ownership}

In the canonical suite and DRAC, the benchmark implements the rules of
the institution, and each of the four adapters (AutoGen, LangGraph,
CAMEL, and MetaGPT) carries those rules through its framework's own
messages, state, and runtime records.  The comparison therefore asks
whether one diagnostic meaning can be read from four different kinds
of record, not how well the frameworks' default designs perform.  The
LangGraph adapter uses compiled replay, and the MetaGPT adapter covers
the core message runtime but not its optional tool stack; behavior
outside these paths is not evaluated.  The rules of independently
developed systems, which the benchmark does not implement, are tested
separately in Section~\ref{sec:res-binding}.

Labels that would reveal the answer are kept away from the diagnosers.
Agent inputs and diagnostic packages omit the branch identity, the
target locus, and the correct task answer, and diagnostic predictions
are frozen before they are matched with the true target.  The
reference implementation of the contracts reads the benchmark's own
mechanism records to check that the contracts behave as specified; it
is therefore not a competing diagnoser with equal information, and we
report each diagnoser together with the evidence it received.

\section{Evaluation Design}
\label{sec:evaluation}

The evaluation asks three questions, one for each finding.  (Q1) When a
contract supports a violation, does the task outcome reveal it?  (Q2)
Which records let a diagnoser detect a violation, and, given the same
records, does it respect the two obligations of
Proposition~\ref{prop:monotone} and Corollary~\ref{cor:irrelevant}?
(Q3) Do the contracts apply to mechanisms that others implemented, and
does a diagnoser's behavior carry over to their workflows?
Table~\ref{tab:glance} summarizes what each question varies, what it
holds fixed, and what it measures.  All quantities describe finite
suites.  Intervals are 95\% percentile bootstrap intervals with 10,000
replicates that resample whole challenges (or external tasks), so that
all branches and adapters of a challenge stay together.

\begin{table}[t]
\caption{\textbf{The evaluation at a glance.}}
\label{tab:glance}
\centering
\footnotesize
\setlength{\tabcolsep}{3pt}
\begin{tabularx}{\columnwidth}{@{}>{\raggedright\arraybackslash}p{0.55cm}
  >{\raggedright\arraybackslash}X >{\raggedright\arraybackslash}X
  >{\raggedright\arraybackslash}X@{}}
\toprule
 & What changes & What is held fixed & What we measure \\
\midrule
Q1 & One mechanism: clean, broken, or restored & Everything before
  the targeted step & Reference verdict and task success on each
  branch \\
\addlinespace
Q2 & The records a diagnoser sees, and the diagnoser & The run being
  diagnosed & Detection rate; errors on missing or irrelevant
  records \\
\addlinespace
Q3 & The system: benchmark or independently developed & Contracts and
  diagnoser prompts & Accepted and rejected bindings; the same
  obligations as in Q2 \\
\bottomrule
\end{tabularx}
\end{table}

\subsection{Actor cohorts and outcome contrasts}

The canonical cohort runs each of the 72 held-out challenges on each of
the four framework adapters, giving 288 \emph{realizations}, each with
three branches.  All agents use \texttt{deepseek-flash}
(DeepSeek-V4.1-Flash lineage)~\cite{deepseek2026flash} with high
requested effort and a 1,000-token completion allowance.  The DRAC
cohort runs its 36 held-out challenges on the same adapters.  Because a
1,000-token allowance truncated nearly all allocator outputs, its formal
cohort uses a 16,384-token allowance chosen on development cases
without consulting correctness; we report the 137 of 144 realizations
whose source agents copied their reports exactly.  Reference verdicts
are computed afterwards from the recorded executions.

For branch $a\in\{A,B,C\}$, let $Y^a_{qf}\in\{0,1\}$ indicate whether
challenge $q$ on adapter $f$ produced exactly the correct answer.  We
measure how much breaking a mechanism hurts the outcome by the
control-averaged damage
\begin{equation}
  \Delta_{\mathrm{inst}}
  =\mathbb E_{q,f}\Bigl[\tfrac{1}{2}\bigl(Y^A_{qf}+Y^C_{qf}\bigr)-Y^B_{qf}\Bigr],
  \label{eq:originaleffects}
\end{equation}
where $\mathbb E_{q,f}$ averages uniformly over realizations.  For Q1,
the key count is the number of broken branches whose violation is
supported although the answer is still correct ($Y^B_{qf}=1$).

\subsection{Diagnosers and the records they see}

\paragraph{Evidence access.}
We first ask how much the choice of records matters.  An LLM diagnoser
without examples (prompt G0, which defines the four loci but not when
evidence is sufficient) reads each of the 288 broken canonical branches
in one of two views.  The \emph{public-output} view contains what the
agents said and the final action.  The \emph{full native} view adds the
records that each framework keeps internally, such as each agent's
actual input, the admission decisions, the versions of shared state,
and the action receipts.  Both views use a JSON response format and a
65,536-token allowance calibrated on development cases.  Detection is
measured by target-locus recovery,
\begin{equation}
    \operatorname{Rec}_{m}(q,f)
    =\mathbf{1}\bigl\{\ell_q\in S_m(E^B_{qf})\bigr\},
    \label{eq:target-recovery}
\end{equation}
where $S_m(E)$ is the set of loci that diagnoser $m$ reports as
supported from view $E$ and $\ell_q$ is the locus that was broken.  To
check that a diagnoser does not simply react to suspicious wording, we
also use 64 \emph{semantic near misses}, 16 per locus.  Each keeps a
surface cue of a violation while removing one condition it needs; for
example, a log may show a dropped message that the decision did not
need.

\paragraph{Same evidence.}
To separate what a diagnoser sees from how it reasons, we give several
diagnosers identical packages.  We build 384 base cases, 96 per locus
across all three families, each in four versions drawn only from the
records the reference uses.  In the admission example, the full
version (F) contains the admission log, its provenance, and the claim
that the log is complete.  Version I removes records that do not bear
on the verdict, such as delivery records for other reports, while
keeping the admission log intact.  Versions M and S remove decisive
records, such as the log entry or its completeness claim, so that the
correct verdict becomes \texttt{unknown}.  Each version is obtained from F by deletion only,
with retained values and provenance checked for exact equality, so each
is a legal deletion.

Four diagnosers read the same packages with the same model: the
reference (R); G0; G1, a prompt without examples that states each
contract and its three-valued semantics; and G2, which adds five
worked examples from development cases to G1.  An audit before any call
confirmed that R used no record unavailable to the others
(1,536/1,536).  We report four quantities: exact agreement with R;
\emph{unsupported certainty}, the share of cases where R says
\texttt{unknown} but the diagnoser still claims a violation is
supported or ruled out; \emph{opposite flips}, the number of times a
diagnoser's verdict switches between supported and ruled out as records
are removed, which Proposition~\ref{prop:monotone} forbids; and
\emph{irrelevance invariance}, the share of F$\to$I pairs whose verdict
does not change, which Corollary~\ref{cor:irrelevant} requires.  Under
allowances of 4,096 and 8,192 tokens, G1 and G2 returned valid outputs
for 78.1\% and 67.1\% of packages, and missing outputs were
concentrated among packages whose correct verdict is
\texttt{supported}.  Both therefore fail the preregistered gate for
interpretability: their rates hold only for the outputs they
completed, and the study's preregistered verdict is
\emph{inconclusive}.

\subsection{Independently developed systems}

For each external system, we read the pinned upstream source and fix
a candidate mapping from one of its mechanisms to a locus before
observing any outcome.  A mapping is accepted only if the system's own
code already implements the rule and records the evidence that the
unchanged contract requires; otherwise it is rejected or narrowed.
Outcome-blind, model-assisted reviews assess each candidate, and when
they judge a mapping plausible but ambiguous, acceptance also requires
development evidence that resolves the stated concern before any
held-out run.  Our adapters for accepted bindings only observe, record,
inject the fault, restore, and check the task; for the two fully
controlled systems, a line-by-line audit finds no institutional logic
in them.  Each accepted binding is tested on held-out A/B/C triplets
forked from a native checkpoint.  For two multi-agent workflows, we
also build full (F), irrelevant-deletion (I), and witness-deletion (W)
packages, obtain correct verdicts from an independent adjudicator that
applies the system's own rule, and run the frozen G1 and G2 prompts
unchanged.

\section{Results}
\label{sec:results}

\subsection{Task outcome is not mechanism correctness}
\label{sec:res-outcome}

\begin{table}[t]
\caption{\textbf{Mechanism verdicts and task outcomes.}  B support:
target violation supported on perturbed branches; control activation:
target supported on clean or restored branches; correct under
violation: supported violations with a correct task outcome.}
\label{tab:formal}
\centering
\small
\setlength{\tabcolsep}{3.5pt}
\begin{tabular}{@{}lcccc@{}}
\toprule
 & B support & \shortstack{Control\\activation} &
 \shortstack{Correct under\\violation} & $\Delta_{\mathrm{inst}}$ \\
\midrule
Routing   & 72/72   & 0/144 & 0/72   & 1.000 \\
Admission & 72/72   & 0/144 & 43/72  & 0.403 \\
State     & 72/72   & 0/144 & 0/72   & 1.000 \\
Action    & 72/72   & 0/144 & 0/72   & 1.000 \\
Canonical & 288/288 & 0/576 & 43/288 & 0.851 \\
\midrule
DRAC      & 137/137 & 0/274 & 28/137 & 0.796 \\
\bottomrule
\end{tabular}
\end{table}

All 288 canonical realizations passed the runtime checks.  The
reference supported the target violation on every perturbed branch and
on none of the 576 clean and restored controls (Table~\ref{tab:formal}).
Because the reference reads the records that the harness writes, these
counts verify implementation conformance rather than diagnostic skill.
Three loci act as positive controls: removing a required delivery,
overriding the authoritative state, or rejecting the terminal action
leaves the aggregator no route to the correct answer, and every such
perturbed branch failed.

Admission is different.  In 43 of 72 admission perturbations, the
contract supported the violation and the task outcome was correct, so
task success concealed 59.7\% of these realized violations.  DRAC, whose
allocator optimizes a capacitated assignment and commits it as a
versioned reservation, reproduces the pattern: 28 supported violations
still produced the exact optimal allocation.  We claim only that such
coexistence occurs, not that it is common: in the fully controlled
MetaGPT and Agents SDK validations of Section~\ref{sec:res-binding},
every perturbed branch failed its native task oracle, and the
native-workflow transfer study did not adjudicate task correctness.  Raising the requested inference effort under
the perturbed institution changed canonical success only from 58 to 61
of 288 and left every target violation supported; the full
effort--restoration comparison is in the supplement.

\subsection{Authoritative evidence licenses the diagnosis}
\label{sec:diagnosisresults}

\begin{figure}[t]
\centering
\definecolor{vizBlue}{HTML}{2A78D6}
\definecolor{vizOrange}{HTML}{EB6834}
\definecolor{vizInk}{HTML}{52514E}
\definecolor{vizGrid}{HTML}{E4E3DF}
\begin{tikzpicture}
\begin{axis}[
  width=\columnwidth, height=3.9cm, scale only axis=false,
  xmin=-0.16, xmax=1.18, ymin=-0.6, ymax=3.6,
  xtick={0,0.25,0.5,0.75,1}, xticklabels={0,.25,.50,.75,1},
  ytick={0,1,2,3}, yticklabels={Action,State,Admission,Routing},
  xlabel={Target-locus recovery on perturbed branches},
  axis line style={vizInk}, tick style={vizInk, line width=0.4pt},
  ticklabel style={font=\footnotesize, text=vizInk}, label style={font=\footnotesize, text=vizInk},
  xmajorgrids, grid style={vizGrid, line width=0.4pt}, axis x line*=bottom, axis y line*=left,
  legend style={font=\footnotesize, draw=none, fill=none, at={(0.5,1.02)}, anchor=south, legend columns=2,
                /tikz/every even column/.append style={column sep=10pt}},
  clip=false,
]
  \draw[vizGrid, line width=2.2pt] (axis cs:1.0000,0) -- (axis cs:1.0000,0);
  \draw[vizGrid, line width=2.2pt] (axis cs:0.0000,1) -- (axis cs:0.1528,1);
  \draw[vizGrid, line width=2.2pt] (axis cs:0.0556,2) -- (axis cs:0.3750,2);
  \draw[vizGrid, line width=2.2pt] (axis cs:0.0000,3) -- (axis cs:1.0000,3);
\addplot[only marks, mark=*, mark size=2.4pt, color=vizOrange, mark options={draw=white, line width=0.6pt},
         error bars/.cd, x dir=both, x explicit, error bar style={vizOrange, line width=0.6pt}]
  coordinates {(1.0000,0) -= (0.0000,0) += (0.0000,0) (0.0000,1) -= (0.0000,0) += (0.0000,0) (0.0556,2) -= (0.0556,0) += (0.0972,0) (0.0000,3) -= (0.0000,0) += (0.0000,0)};
\addlegendentry{Public outputs only}
\addplot[only marks, mark=*, mark size=2.4pt, color=vizBlue, mark options={draw=white, line width=0.6pt},
         error bars/.cd, x dir=both, x explicit, error bar style={vizBlue, line width=0.6pt}]
  coordinates {(1.0000,0) -= (0.0000,0) += (0.0000,0) (0.1528,1) -= (0.0972,0) += (0.1250,0) (0.3750,2) -= (0.1389,0) += (0.1528,0) (1.0000,3) -= (0.0000,0) += (0.0000,0)};
\addlegendentry{Native execution records}
  \node[font=\scriptsize, text=vizInk, anchor=west] at (axis cs:1.0000,0) {\,72/72 both};
  \node[font=\scriptsize, text=vizInk, anchor=east] at (axis cs:0.0000,1) {0/72\,};
  \node[font=\scriptsize, text=vizInk, anchor=west] at (axis cs:0.2778,1) {\,11/72};
  \node[font=\scriptsize, text=vizInk, anchor=east] at (axis cs:0.0000,2) {4/72\,};
  \node[font=\scriptsize, text=vizInk, anchor=west] at (axis cs:0.5278,2) {\,27/72};
  \node[font=\scriptsize, text=vizInk, anchor=east] at (axis cs:0.0000,3) {0/72\,};
  \node[font=\scriptsize, text=vizInk, anchor=west] at (axis cs:1.0000,3) {\,72/72};
\end{axis}
\end{tikzpicture}
\caption{\textbf{Evidence access by locus.}  Target-locus recovery of
the G0 diagnoser on the 288 perturbed canonical branches (72
per locus) from public outputs and from native execution records.
Whiskers: 95\% bootstrap intervals over the 18 challenges of each
locus.}
\Description{Dot plot with one row per locus.  Public outputs recover
0/72 routing, 4/72 admission, 0/72 state, and 72/72 action failures;
native records recover 72/72, 27/72, 11/72, and 72/72.}
\label{fig:access}
\end{figure}

With public outputs alone, the G0 diagnoser recovered 76 of
288 perturbed mechanisms (26.4\%); with native execution records it
recovered 182 (63.2\%), a paired increase of $0.368\,[0.271,0.472]$.
Figure~\ref{fig:access} shows where the gain comes from.  Public outputs
sufficed almost only for action failures, whose rejection is visible.  Native
records made routing fully recoverable but left admission and state
maintenance mostly unrecovered.  On the 64 semantic near misses, the two
views activated the target in 2 and 1 cases, so the gain does not
appear to come from reacting to surface cues.

\begin{table}[t]
\caption{\textbf{Same evidence, different procedures.}  Rates for G1
and G2 are conditional on valid outputs; both fail the preregistered
interpretability gate (Section~\ref{sec:evaluation}).  Unsupp.\
cert.: resolved verdict where R returns \texttt{unknown}.  Flips:
opposite resolved verdicts along deletion edges.}
\label{tab:same-evidence}
\centering
\small
\setlength{\tabcolsep}{3pt}
\begin{tabular}{@{}lccccc@{}}
\toprule
 & Valid & \shortstack{Agree\\with R} & \shortstack{Unsupp.\\cert.} &
 Flips & \shortstack{F$\to$I\\invariant} \\
\midrule
G0 & 1536/1536 & 764/1536  & 576/768 & 138/1152 & 259/384 \\
G1 & 1199/1536 & 1139/1199 & 53/533  & 0/748    & 287/287 \\
G2 & 1031/1536 & 1022/1031 & 8/435   & 0/561    & 227/228 \\
\bottomrule
\end{tabular}
\end{table}

Access alone does not determine the diagnosis (Table~\ref{tab:same-evidence}).
Given exactly the records that the contract uses, G0 recovered all 192
supported violations and activated on none of the 192 ruled-out cases.
Yet when decisive witnesses were deleted, it asserted a resolved verdict
on 576 of 768 packages whose evidence forces none, and it reversed 138
of 1,152 resolved verdicts along deletion edges, violating
Proposition~\ref{prop:monotone}.  It also changed 125 of 384 verdicts
when only records outside a retained witness were removed, violating
Corollary~\ref{cor:irrelevant}.  Stating the contracts explicitly
changed this behavior: on their valid outputs, G1 and G2 made no
opposite flips, preserved their verdicts under irrelevant deletion in
all but one case, and resolved far fewer unknown cases (53/533 and
8/435).  Because their valid outputs cover
78.1\% and 67.1\% of packages, we read these results as showing that
procedures sharing the same evidence can differ in whether they respect
the contract obligations, not as a ranking of the procedures.

\subsection{Contracts bind to native mechanisms, and some bindings do not}
\label{sec:res-binding}

\begin{table}[t]
\caption{\textbf{External bindings.}  Held-out A/B/C triplets forked
from native checkpoints; entries give target support on B and
activation on A and C.  $^\dagger$Both blind reviews judged this
mapping plausible but ambiguous; it was accepted on development
evidence.  Co-STORM yielded 18 qualified triplets from 24 held-out tasks;
the table reports the 9 of them that fall in the 12-task diagnostic
subset fixed by a hash rule before held-out execution, whose other three
tasks had no qualified triplet (one lacked an original pre-operation
snapshot and two failed native qualification).  Rejected candidates are
listed below the rule.}
\label{tab:external}
\centering
\small
\setlength{\tabcolsep}{3pt}
\begin{tabular}{@{}llcc@{}}
\toprule
Native mechanism & Locus & $n$ & B / A / C \\
\midrule
MetaGPT recipient-local delivery & routing   & 8  & 8 / 0 / 0 \\
MetaGPT memory replay suppression & state    & 8  & 8 / 0 / 0 \\
Agents SDK tool call$\to$receipt  & action    & 12 & 12 / 0 / 0 \\
Co-STORM citation membership$^\dagger$ & admission & 9  & 9 / 0 / 0 \\
\midrule
Agents SDK handoff / session      & routing / state & \multicolumn{2}{c}{rejected} \\
Agents SDK approval gate          & admission & \multicolumn{2}{c}{rejected} \\
Workflow without admission guard  & admission & \multicolumn{2}{c}{rejected} \\
\bottomrule
\end{tabular}
\end{table}

\paragraph{Accepted external bindings.}
Table~\ref{tab:external} lists four native mechanisms, from three
independently authored systems, whose upstream code already implemented
a contract's rule and exposed its authoritative records.  Blind reviews
judged three of these mappings natural; the admission mapping was
judged plausible but ambiguous and was accepted only after development
evidence resolved the reviewers' concern.  Together the accepted
bindings cover all four loci, each in a narrow form, and the admission
binding is the least secure.  On held-out triplets
forked from native checkpoints, the unchanged contracts supported every
perturbed branch and no clean or restored control, and withholding the
native tool receipt returned \texttt{unknown} in all 18 action
projections.

\paragraph{Rejected bindings.}
Candidates were rejected when the native mechanism did not implement the
contract's rule: generic handoff and session persistence would have
required experiment-authored policies, an approval gate authorizes
actions rather than admitting evidence, and one audited workflow exposed
no separable admission guard.  A post-execution source audit also
narrowed one workflow's measured predicate from admission to a
successor-routing subtype; we report that workflow only as exploratory
evidence in the supplement.  Released trace collections rarely expose
the records that the contracts require; for example, none of 156
TraceElephant MAS traces supports any locus, and 147 remain
unresolved~\cite{chen2026traceelephant}.

\subsection{Diagnostic obligations need not transfer}
\label{sec:res-transfer}

\begin{figure}[t]
\centering
\definecolor{vizBlue}{HTML}{2A78D6}
\definecolor{vizOrange}{HTML}{EB6834}
\definecolor{vizInk}{HTML}{52514E}
\definecolor{vizGrid}{HTML}{E4E3DF}
\begin{tikzpicture}
\begin{groupplot}[
  group style={group size=2 by 1, horizontal sep=0.9cm},
  width=0.54\columnwidth, height=3.9cm,
  xmin=-0.6, xmax=1.6, ymin=0, ymax=1.2,
  xtick={0,1}, xticklabels={Canonical, Co-STORM},
  ytick={0,0.5,1},
  axis line style={vizInk}, tick style={vizInk, line width=0.4pt},
  ticklabel style={font=\footnotesize, text=vizInk},
  title style={font=\footnotesize, text=vizInk, yshift=-2pt},
  ymajorgrids, grid style={vizGrid, line width=0.4pt}, axis x line*=bottom, axis y line*=left,
  clip=false,
]
\nextgroupplot[title={Witness deleted $\to$ \texttt{unknown}}, yticklabels={0,.50,1}]
\addplot[color=vizBlue, line width=1pt, mark=*, mark size=2.2pt, mark options={draw=white, line width=0.5pt}, error bars/.cd, y dir=both, y explicit, error bar style={vizBlue, line width=0.6pt}] coordinates {(-0.07,1.0000) -= (0,0.0135) += (0,0.0000) (0.9299999999999999,1.0000) -= (0,0.1206) += (0,0.0000)};
\node[font=\scriptsize, text=vizInk, anchor=south, yshift=2.5pt] at (axis cs:-0.07,1.0000) {280/280};
\node[font=\scriptsize, text=vizInk, anchor=south, yshift=2.5pt] at (axis cs:0.9299999999999999,1.0000) {28/28};
\addplot[color=vizOrange, line width=1pt, mark=square*, mark size=2.2pt, mark options={draw=white, line width=0.5pt}, error bars/.cd, y dir=both, y explicit, error bar style={vizOrange, line width=0.6pt}] coordinates {(0.07,1.0000) -= (0,0.0166) += (0,0.0000) (1.07,1.0000) -= (0,0.1487) += (0,0.0000)};
\node[font=\scriptsize, text=vizInk, anchor=north, yshift=-2.5pt] at (axis cs:0.07,1.0000) {228/228};
\node[font=\scriptsize, text=vizInk, anchor=north, yshift=-2.5pt] at (axis cs:1.07,1.0000) {22/22};
\nextgroupplot[title={Irrelevant deleted $\to$ verdict kept}, yticklabels={0,.50,1}]
\addplot[color=vizBlue, line width=1pt, mark=*, mark size=2.2pt, mark options={draw=white, line width=0.5pt}, error bars/.cd, y dir=both, y explicit, error bar style={vizBlue, line width=0.6pt}] coordinates {(-0.07,1.0000) -= (0,0.0132) += (0,0.0000) (0.9299999999999999,0.4643) -= (0,0.1690) += (0,0.1776)};
\node[font=\scriptsize, text=vizInk, anchor=south, yshift=2.5pt] at (axis cs:-0.07,1.0000) {287/287};
\node[font=\scriptsize, text=vizInk, anchor=west, xshift=4pt] at (axis cs:0.9299999999999999,0.4643) {13/28};
\addplot[color=vizOrange, line width=1pt, mark=square*, mark size=2.2pt, mark options={draw=white, line width=0.5pt}, error bars/.cd, y dir=both, y explicit, error bar style={vizOrange, line width=0.6pt}] coordinates {(0.07,0.9956) -= (0,0.0200) += (0,0.0036) (1.07,0.3810) -= (0,0.1734) += (0,0.2103)};
\node[font=\scriptsize, text=vizInk, anchor=north, yshift=-2.5pt] at (axis cs:0.07,0.9956) {227/228};
\node[font=\scriptsize, text=vizInk, anchor=west, xshift=4pt] at (axis cs:1.07,0.3810) {8/21};
\end{groupplot}
\node[font=\footnotesize, text=vizInk, anchor=north] at ($(group c1r1.south)!0.5!(group c2r1.south)+(0,-0.55cm)$)
  {\tikz[baseline=-0.6ex]{\draw[vizBlue, line width=1pt] (0,0) -- (0.4,0);
     \fill[vizBlue] (0.2,0) circle (2pt);}\,G1 (zero-shot, three-valued)\qquad
   \tikz[baseline=-0.6ex]{\draw[vizOrange, line width=1pt] (0,0) -- (0.4,0);
     \fill[vizOrange] (0.13,-0.07) rectangle (0.27,0.07);}\,G2 (G1 + demonstrations)};
\end{tikzpicture}
\caption{\textbf{The same frozen procedures on canonical and
independently developed workflow evidence.}  Left: share of valid pairs in which deleting the
decisive witness yields \texttt{unknown} (canonical: F resolved, S
\texttt{unknown}).  Right: share in which deleting records outside a
retained witness leaves the verdict unchanged.  Whiskers: Wilson 95\%
intervals.  Canonical rates are conditional on valid outputs; the
reference R scores 28/28 on both Co-STORM obligations.}
\Description{Two slope panels comparing canonical and Co-STORM rates.
Witness deletion stays at 1.0 for G1 (280/280 to 28/28) and G2 (228/228
to 22/22).  Irrelevant-evidence invariance falls for G1 from 287/287 to
13/28 and for G2 from 227/228 to 8/21.}
\label{fig:transfer}
\end{figure}

We then asked whether obligations that G1 satisfied on the canonical
packages also hold on evidence from an independently developed workflow
(Figure~\ref{fig:transfer}).  The witness-deletion
obligation transferred: on the Co-STORM packages, R, G1, and G2 returned
\texttt{unknown} on every jointly valid witness-deletion pair (28/28,
28/28, and 22/22).  Irrelevant-evidence invariance did not.  The same frozen G1 procedure,
which preserved its verdict on all 287 valid canonical F$\to$I pairs,
preserved it on only 13 of 28 Co-STORM pairs (Wilson 95\% interval
0.30--0.64; the pairs come from 10 tasks), whereas R preserved all 28.
G2 preserved 8 of 21 and, on full evidence, matched the adjudicated
verdict less often than G1 (net $-4$ over 25 jointly valid packages).
The canonical and native distributions differ, so these contrasts are
evidence of a transfer gap rather than population effect estimates.
They show that satisfying the benchmark's obligations internally is
not, by itself, evidence that a diagnostic procedure satisfies them on
an independently developed workflow.

\section{Discussion and Limitations}
\label{sec:discussion}

\paragraph{Canonical construction.}
The canonical institutions are authored by the benchmark, the role
topology is compact, and instances within a family come from one
generator.  Three loci therefore behave as positive controls, and the
coexistence of violation and correct outcome is an existence result:
we do not estimate how often it occurs, and every perturbed branch in
the fully controlled MetaGPT and Agents SDK validations failed its task
oracle.

\paragraph{Diagnostic evidence.}
All generic procedures use one model lineage.  The valid outputs of G1
and G2 cover 78.1\% and 67.1\% of the same-evidence packages, so their
rates are conditional and the preregistered verdict is inconclusive.
The transfer result rests on one independently developed workflow with
28 packaged pairs
from 10 tasks, whose admission binding both blind reviews found
ambiguous; a second workflow is reported only as exploratory because
its locus was narrowed after execution.

\paragraph{Scope.}
The contracts cover four loci and exclude incentives.  Interventions
perturb one mechanism at a time, restorations are supplied rather than
discovered, and every external binding required a manual source audit;
discovering bindings automatically, diagnosing simultaneous violations,
and extending the contracts to open-ended tasks without exact oracles
remain open.

\section{Conclusion}
\label{sec:conclusion}

The execution that opened this paper answered correctly while its
admission rule was violated, and only the admission log could say so.
Diagnostic contracts turn that observation into a method: they state
what a violation is, which records can establish it, and when the honest
verdict is \emph{unknown}.  A correct answer does not certify the
mechanisms behind it; a diagnosis is licensed by authoritative records,
not by the outcome or a label; and diagnostic behavior that looks solved
on a benchmark has to be tested again on each new system.

\bibliographystyle{ACM-Reference-Format}
\bibliography{references}

\end{document}